\newtheorem{theorem}{Theorem}  
\newtheorem{proof}{Proof}[section]  
\begin{document}
%

\title{Rapido: A Layer2 Payment System for Decentralized Currencies}

\author{\IEEEauthorblockN{Changting Lin\IEEEauthorrefmark{1},
 Ning Ma\IEEEauthorrefmark{2},
 Xun Wang\IEEEauthorrefmark{1},
 Zhenguang Liu\IEEEauthorrefmark{1},
Jianhai Chen\IEEEauthorrefmark{2},
Shouling Ji\IEEEauthorrefmark{2}
}
\IEEEauthorblockA{School of Computer and Information Engineering,
Zhejiang Gongshang University, China \IEEEauthorrefmark{1}\\
Email: \{linchangting, wx, lzg\}@zjgsu.edu.cn
}

\IEEEauthorblockA{College of Computer Science and Technology, 
Zhejiang University, China \IEEEauthorrefmark{2}\\
Email: \{3170101236, chenjh919, sji\}@zju.edu.cn
}

}




%


\maketitle

\begin{abstract}
%

Bitcoin blockchain faces the bitcoin scalability problem, for which bitcoin's blocks contain the transactions on the bitcoin network. The on-chain transaction processing capacity of the bitcoin network is limited by the average block creation time of $10$ minutes and the block size limit. These jointly constrain the network's throughput. The transaction processing capacity maximum is estimated between $3.3$ and $7$ transactions per second (TPS). A Layer2 Network, named Lightning Network, is proposed and activated solutions to address this problem. LN operates on top of the bitcoin network as a cache to allow payments to be affected that are not immediately put on the blockchain. However, it also brings some drawbacks. In this paper, we observe a specific payment issue among current LN, which requires additional claims to blockchain and is time-consuming. We call the issue as \textit{\textbf{shares}} issue. Therefore, we propose Rapido to explicitly address the \textit{\textbf{shares}} issue. Furthermore, a new smart contract, D-HTLC, is equipped with Rapido as the payment protocol. We finally provide a proof of concept implementation and simulation for both Rapido and LN, in which Rapdio not only mitigates the \textit{\textbf{shares}} issue but also mitigates the skewness issue thus is proved to be more applicable for various transactions than LN.


\end{abstract}

\begin{IEEEkeywords}

Blockchain; Lightning Network; Privacy

\end{IEEEkeywords}


%
\IEEEpeerreviewmaketitle



\section{Introduction}
\label{intro}
Bitcoin is firstly issued by a blockchain \cite{btc} and the most widely used valuable decentralized digital currency \cite{cmc}. However, it faces a serious issue, i.e., the scalability problem.
To solve this problem, a Layer2 system named Lightning Network (LN) \cite{lightning} is introduced, which is an off-chain payment solution for performing decentralized digital currency payment on top of a blockchain. Specifically, LN is something like the Virtual Private Network (VPN) which establishes a virtual peer-to-peer connection over the Internet. LN theoretically enables fast and secure mircopayments between participating nodes, which features an overlay system for making payments through a network without delegating custody of funds \cite{lightning}. LN is constituted of numerous payment channels that allow the two-party connect by a payment channel securely maintain and update its own ledger by RSMC (Recoverable Sequence Maturity Contract) \cite{lightning}. 
To leverage the existing payment channels to perform end-to-end payments, the payment channel network is introduced. 
Furthermore, HTLCs (Hashed Timelock Contracts) \cite{lightning} is designed, which enables a payment acrosses two or more payment channels with security. 

Benefiting from LN, any transaction between the two nodes among LN can be performed and updated rapidly (an payment takes about $600$ms \cite{concurrencyandprivacy}) instead of confirming by blocks (a on-chain payment should be confirmed by $6$ blocks and take about $11$ minutes on July 2018 \cite{blockconfirm}). As the scale of LN is growing, some limitations have been highlighted, such as sensitive information leakage issue \cite{concurrencyandprivacy, bolt}, payments concurrency issue \cite{concurrencyandprivacy}, skewness issue \cite{revive}, route scalability issue \cite{sprites, flare} and so on. Meanwhile, several works have studied on these limitations \cite{concurrencyandprivacy, bolt, revive, sprites, flare, tumblebit}. In addition, similar payment systems such as credit networks \cite{stella, ripple} also offer some solutions for these analogous issues \cite{silentwhispers, listentoripple, privacypreserving}.

\begin{figure}
\centering
\includegraphics[scale=0.45]{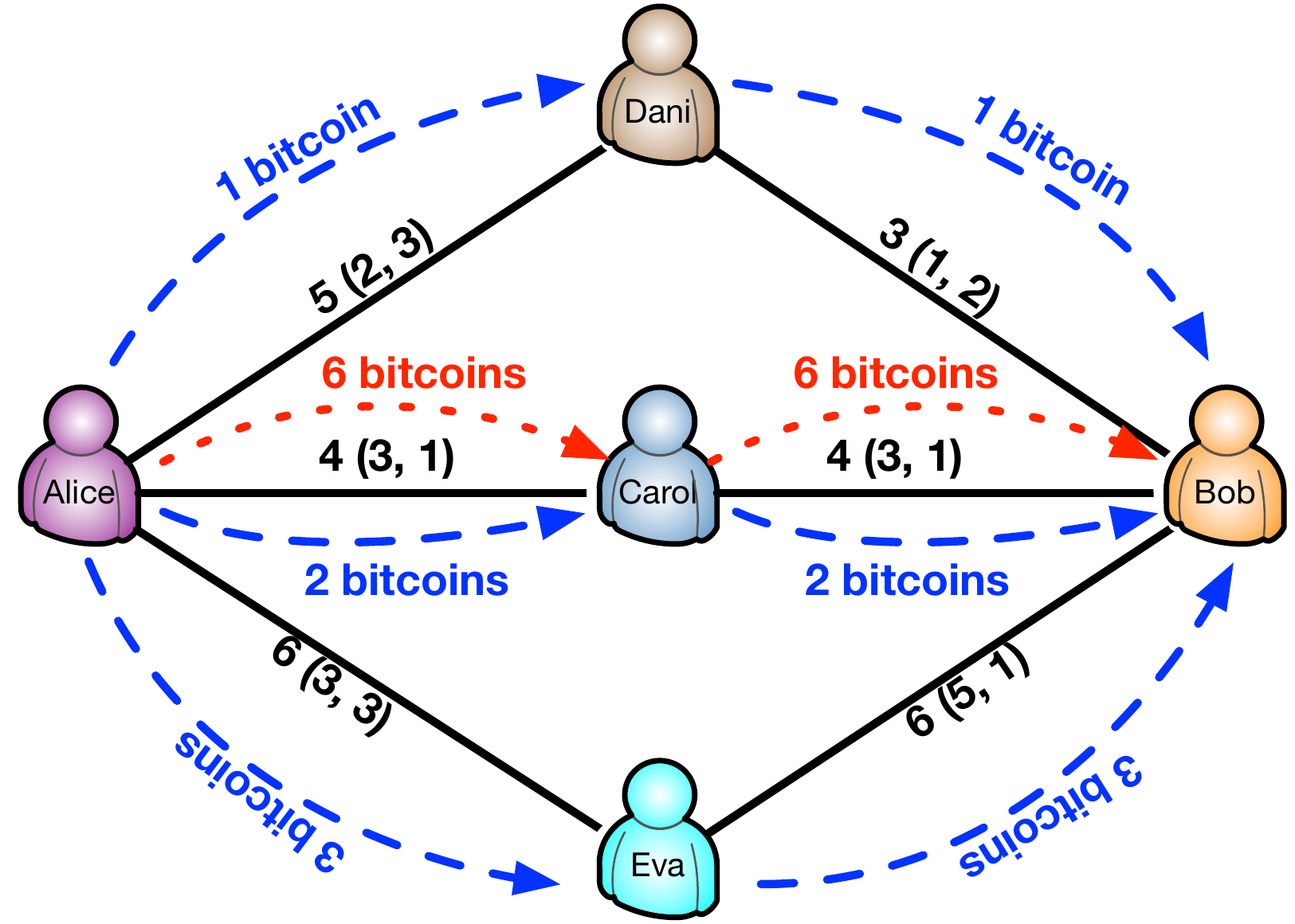}
\caption{\textbf{An Illustrative Example.} The black bold numbers represent the deposits in each payment channel, which is the initial state of the payment channel network. For example, $\textbf{5(2,3)}$ between Alice and Dani represents that this payment channel has total deposits $\textbf{5}$ bitcoins including Alice has $\textbf{2}$ bitcoins and Dani has $\textbf{3}$ bitcoins, respectively. Suppose that Alice is willing to pay Bob $6$ bitcoins and Alice has and only has the $3$ payment channels. One naive way that Alice needs close the other channels and withdraws the deposits and then set up a new payment channel to Bob, which is time-consuming and money-consuming. The blue dash lines represent a simple instance of Rapido to resolve the \textit{\textbf{shares}} issue.}
\label{Fig:example}
\end{figure}

In LN, we observe a payment issue where a node has enough deposits to pay another node while there is no payment channel available to perform this payment. We call this \textit{\textbf{shares}} issue, which appears commonly in the current Layer2 Network. In particular, the \textit{\textbf{shares}} issue is illustrated in Fig. \ref{Fig:example}. Alice is willing to pay Bob $6$ bitcoin while Alice cannot find an available routing path to fulfill this payment. One naive solution is that Alice requires to close the three channels and withdraw all her deposits and then sets up a new payment channel to Bob to fulfill this payment, where the close and set up operations must be claimed to the blockchain for confirming. A node which sets up and closes a payment channel need pay extra fees \cite{lightning}. In addition, the naive solution might also introduce a skewness issue \cite{revive} and payment value privacy issue \cite{concurrencyandprivacy}. Therefore, the \textit{\textbf{shares}} issue not only consumes an extra time and consumes an extra money but also might introduce the other issue such as the skewness and privacy issues.

In this paper, we propose Rapido to address the \textit{\textbf{shares}} issue, which is general and applicable to all payments among the Layer2 Network. Different from LN, a new routing algorithm that incents the nodes always online is proposed in Rapido. Besides, as a transaction is generated, the payment value can be split into several shares by the Value Distributing Problem (VDP) program and then respectively distributed to the other node through these discovered routing paths. With the VDP program, Rapido not only mitigates the skewness of nodes but also preserves the privacy of payment value through splitting the value into shares. In addition, we design D-HTLC (Distributed-HTLC) to guarantee these payments with security. The blue dash lines in Fig. \ref{Fig:example} represent one instance of Rapido to resolve the \textit{\textbf{shares}} issue.



To summarize, the contributions of our paper include the following.
\begin{enumerate}
	\item We observe a payment issue, the \textit{\textbf{shares}} issue, which might appear in the Layer2 Network and causes time-consuming and money-costs.
	\item We design Rapido to mitigate the \textit{\textbf{shares}} issue, which is equipped with D-HTLC and inherently preserves the privacy of total payment value. 
	\item We prove that the VDP is NP-hard.
	\item The simulation demonstrates that Rapido not only mitigates the \textit{\textbf{shares}} issue but also mitigates the skewness of nodes.
\end{enumerate}


\section{Background AND PRELIMINARIES}
\label{background}
In this section, we demonstrate the necessary background of our paper including the blockchain and LN.

\subsection{Blockchain}
A blockchain is a growing list of records, calls blocks, which are linked using cryptography \cite{btc}. Blockchains which serve as the public transaction ledger of cryptocurrencies and are readable by the public of these cryptocurrencies, such as bitcoin \cite{btc} and ethereum \cite{eth}. Each block contains a cryptographic hash of the previous block, a timestamp and transaction data. 
By design, a blockchain is resistant to modification of the data. It is ``an open, distributed ledger that can record transactions between two parties efficiently and in a verifiable and permanent way''. To use as a distributed ledger, a blockchain is typically managed by a peer-to-peer network collectively adhering to a protocol for inter-node communication and validating new blocks. Once recorded, the data in any given block cannot be altered retroactively, which requires a consensus of the network majority.

Bitcoin is a widely used cryptocurrency, a form of electronic cash \cite{btc}. Bitcoins can be paid among the peer-to-peer bitcoin network directly, without the need for intermediaries, though most transactions are made through a cryptocurrency exchange market. Transactions are verified by nodes through cryptography and recorded in a public distributed ledger (blockchain). However, it brings some serious issues, such as the bitcoin scalability problem. In the real world, the on-chain transaction processing capacity maximum is estimated $7$ TPS, which is limited by the average block creation time (about $11$ minutes) and the block size limit ($1$MB) \cite{scalingproblem}.

\subsection{Lightning Network}
To mitigate the bitcoin scalability problem, various solutions are proposed, such as \textit{fork}s (hard \textit{fork} and soft \textit{fork}) \cite{bitcoinsurvey} and the Layer2 systems (LN \cite{lightning} and Sprites \cite{sprites}). \textit{fork} means the protocol changes and divides the blockchain into two distinct entities, which is typically conducted in order to add new features to a blockchain to reverse the effects of hacking or catastrophic bugs \cite{bitcoinsurvey}. Different from \textit{fork}, the Layer2 Network is implemented on top of the blockchain (most commonly bitcoin network) as a cache to allow payments to be performed while does not immediately claim on the blockchain \cite{lightning}. LN is a Layer2 Network, which is constituted of a mass of bi-directional payment channels\footnote{For simplicity, in following payment channel of this paper means bi-directional payment channel.} and theoretically enables fast transactions between participants.

\subsubsection{Payment Channels}

LN, which constitutes of numerous payment channels, is a technique designed to allow two participants to make multiple transactions (e.g., bitcoin) without committing all of the transactions to the blockchain. Each payment channel has two parties, who deposit their own bitcoins and make many secure payments between each other in exchange for making only a few events (e.g., open a new payment channel) on the blockchain. Consequently, the payment channel improves the transaction throughput and eliminates the transfer fee between the two participants. Furthermore, Revocable Sequence Maturity Contract (RSMC) \cite{lightning} must be leveraged one each payment channel, which guarantees the transactions' safety between two-party.

\subsubsection{Payment Channel Network}
A payment channel only permits secure transfer of funds inside a channel while it cannot perform a payment between an indirect-linked two-party. The payment channel network is hence introduced, which is able to perform a secure payment using a series of payment channels by Hashed Timelock Contracts (HTLCs) \cite{lightning}. HTLC leverages hashlocks and timelocks and can allow payments to be securely routed across multiple payment channels without any risk of intermediate nodes stealing the payment in transit.


\begin{figure*}
\centering
\includegraphics[scale=0.35]{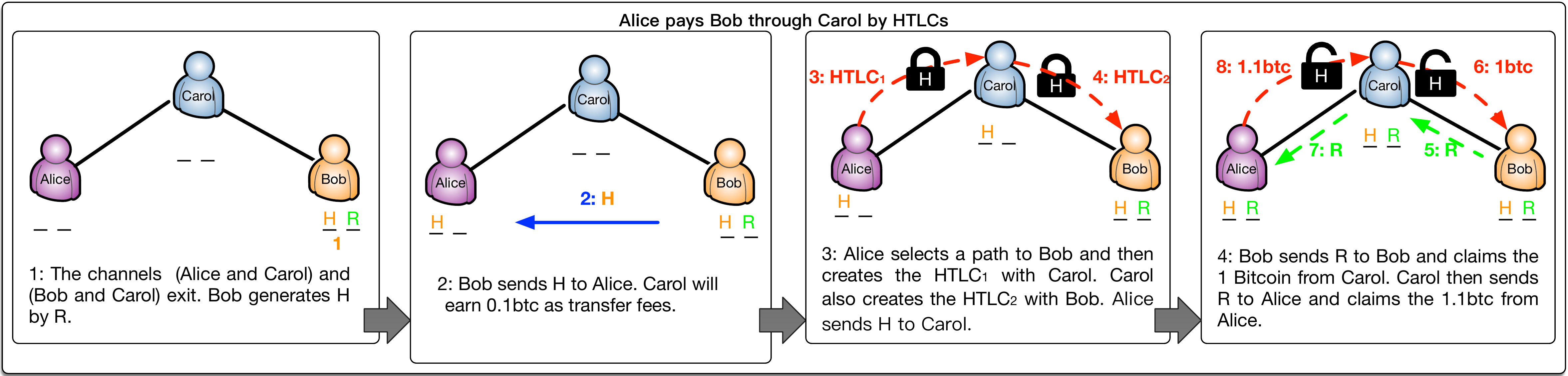}
\caption{\textbf{An Example to Elucidate HTLCs.} Alice wants to pay Bob through Carol by HTLCs. The icons of locks represent the hashtime lock in different status. The blue solid line represents that the $H$ is transferred from Bob to Alice by a secure way; the red dash lines represent all participates have HTLCs and the deposits are locked in payment channels; the green dash lines represent the $R$ is transferred.}
\label{Fig:htlcprocess}

\end{figure*}

In a nutshell, 
suppose that Alice wants to pay Bob $1$ bitcoin through Carol. As an intermediary, Carol earns transfer fees from this payment. The brief steps are shown in Fig. \ref{Fig:htlcprocess} and represented as follows. 
\begin{enumerate}
	\item Bob generates a random number $R$ and calculates its \texttt{SHA-256} hash $H$.
	\item Bob gives $H$ to Alice by secure ways (out of the scope of our paper). Carol will earn $0.1$ bitcoin as transfer fees.
	\item Alice uses her payment channel to pay Carol $1.1$ bitcoin, but she adds a hashlock that Carol gives her to the $1.1$ bitcoin along with an extra condition: in order for Carol to withdraw the $1.1$ bitcoin, she must provide $R$ which was used to produce $H$ in $2$ days\footnote{We use days here as in the original description \cite{lightning}. Instead, the height of blocks is used for description in the real world.}. Carol uses her payment channel to pay Bob $1$ bitcoin, but she also adds a same hashlock that Bob gives her to the payment along with an extra condition: in order for Bob to withdraw the $1$ bitcoin, she must provide $R$ which was used to produce $H$ in $1$ days. 
	\item Bob has $R$ which is used to generate $H$, so Bob can use it to unlock the hashlock and fully receive the $1$ bitcoin from Carol; Carol also uses the $R$ to unlock the hashlock from Alice. Finally, Bob receives $1.1$ bitcoin from Alice. In addition, Carol earns $0.1$ bitcoin as transfer fees.
\end{enumerate}

\subsubsection{Routing}

 In LN, the source should find a routing path to the destination while a payment is generated. The LN routing discovering implements a modified version of Dijkstra's algorithm to find the shortest path between the them \cite{lnd}.
If the available routing path is found by the source, it returns a series of intermediate nodes which encoded the chosen path from the destination to the source.

\section{Motivation and Overview}
\label{motivation}
In the following, we first describe an example to illustrate that the payment performed on the LN causes time-consuming and money-costs. Then, we discuss the motivation of this paper. Lastly, we propose the overview for Rapido.

\subsection{An Illustrative Example and Motivation}

Although LN enables fast payments between participating nodes and has been touted as a solution to the bitcoin scalability problem, drawbacks are also obvious. For example, some payments might not be performed through LN directly. 


Fig. \ref{Fig:example} describes a simple payment case to elucidate the above-mentioned drawbacks. Suppose that the simple topology is a part of LN. Moreover, we suppose that Alice has and only has $3$ routing paths to Bob. However, the naive solution not only cannot resolve the \textit{\textbf{shares}} issue but also brings some issues (privacy, skewness issue, an extra time-consuming and an extra money-costs) , which has been discussed in Section \ref{intro}. These issues endows LN with a poor efficiency, which goes against the original intention of LN.

Some existing works focus on LN \cite{concurrencyandprivacy, revive, sprites}, such as a skewed network issue \cite{revive} and the  payment value's privacy issue \cite{concurrencyandprivacy}. For executing a rebalance process, all participants of rebalance process need to negotiate a better ratio with each other before a rebalance is proceed. A rebalance process might be aborted if the negotiation process fails. Moreover, the participants must be in a loop \cite{revive}. Furthermore, REVIVE cannot resolve the \textit{\textbf{shares}} issue which is shown in Fig. \ref{Fig:example}. For the privacy of payment value, the authors suppose that all intermediate nodes among a payment path are honest and then a trust function is revealed to all of them. By this way, any intermediary of this payment path has the knowledge of the payment value. However, no one can guarantee that all the intermediary are honest \cite{concurrencyandprivacy}.

 We propose an idea to resolve the \textit{\textbf{shares}} issue by multiple path payments. As shown in Fig. \ref{Fig:example}, in an alternative case, Alice splits $6$ bitcoins into $3$ shares, such as $1$, $2$ and $3$, and respectively sends the $3$ shares to Bob through Dani, Carol and Eva, which does not require to close any existing payment channel or set up a new one. We hereby propose a new scheme, named Rapido, that a customer can perform a payment by splitting the payment value into $s$ shares and then respectively performs the $s$ payments to the merchant. Rapido can mitigate the above-mentioned issues, such as \textit{\textbf{shares}}, skewness, payment value privacy time-consuming and money-costs issues, and then perform payments more efficiently among the Layer2 Network.

\subsection{Challenges}
To realize Rapido, there are some challenges to be solved..
\begin{itemize}
	\item How can a customer find available routing paths to the merchant effectively and efficiently.
	\item How to design a split strategy of payment value for customers and avoid serious skewness.   
	\item How to design a smart contract to fulfill the payment effectively and securely.
\end{itemize}

For the first challenge, we implement a routing discovering algorithm with a proactive and a reactive part. For the second challenge, we propose a VDP program. For the third challenge, we design a new smart contract named D-HTLC, which enables the customer to perform a payment successfully.  

\subsection{Overview and Goals}
\subsubsection{\textbf{Overview}}
The overview of Rapdio is shown in Fig. \ref{Fig:process}. What is more, simplified steps of a payment from Client $1$ to Client $2$ through a Layer2 Network are represented as follows.
\begin{enumerate}[\textbf{Step} 1]
	\item Through Beacon Election Module, several nodes are randomly elected as beacons in one period. 
	\item Each Client finds routing paths to all beacons by the Proactive Module.
	\item The information of routing paths (e.g., the deposits distribution among the payment channel) is gathered by Reactive Module of Client 1 when a transaction from Client 1 to Client 2 is generated.
	\item According to the gathered information, Client 1 leverages Value Distributing Module to calculate then select several available routing paths and splits the payment value into several shares.
	\item The Client 1 distributes these shares to Client 2 by D-HTLCs among the selected paths, respectively.  
\end{enumerate}

\begin{figure}
\centering
\includegraphics[scale=0.35]{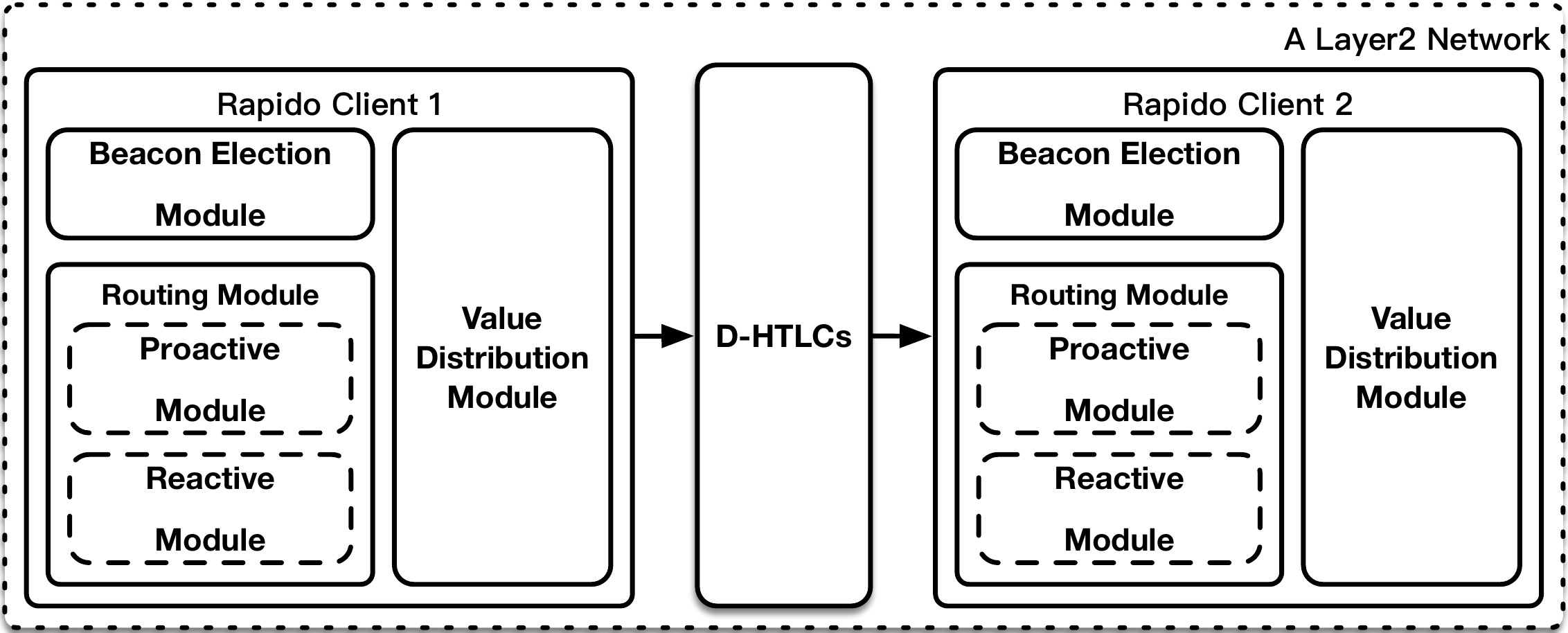}
\caption{\textbf{An Overview of Rapido.} A payment is performed from the Rapido Client $1$ to the Rapido Client $2$ by D-HTLC among a Layer2 Network.}
\label{Fig:process}

\end{figure}

\subsubsection{\textbf{Goals}}
Rapido further reaches the following properties.
\begin{itemize}
	\item \textbf{High-respond.} Rapido mitigates the \textit{\textbf{shares}} issue and reaches a high success rate for numerous mircopayments.
	\item \textbf{Less-skewed.} Rapido mitigates the skewed issue and the congestion issue, such as a skewed node \cite{revive}, after performing lots of payments.
	\item \textbf{Privacy.} Rapido can preserve payment value privacy, which is a private issue of customer and merchant.
	\item \textbf{Generality.} Rather than implementing a solution for $shares$ issue, Rapido is general and applicable to all payments among the Layer2 Network. 
\end{itemize}

\section{The R{apido} Construction}
\label{construction}

In this section, we discuss Rapido construction in detail, which includes \textit{1) Beacons Election, 2) Routing, 3) Value Distributing}, and \textit{4) D-HTLC}. Further, the difference between Rapido and current LN is disscussed in \textit{5) Remark.}

\subsection{Beacon Election}
In Rapido, each node not only can be a normal status but also can be beacon status, which can coexist in one node. Hence, each node has a fixed probability to be elected as a beacon in Rapido. To elect beacons, the network topology is split into $h \in \mathbb{Z}^+$ portions. Then, the nodes in a portion elect a beacon for this portion. The election follows uniform distribution and rotates in a period (e.g., $12$ hours). 
By this way, each node has a chance to be elected as a beacon and can earn transfer fees, which incents all nodes always online as well as improve the entire topology and transaction capabilities of Rapido.


\subsection{Routing in Rapido}
Routing path from a customer to a merchant is prerequisite for a payment to succeed. The routing discovering algorithm implemented in Rapido includes a proactive part as well as a reactive part. The proactive part gathers information that is static or changes slowly, such as payment channels between nodes, whereas the reactive part mainly focuses on dynamic information such as the distribution of deposits among payment channels.



\begin{itemize}
	\item \textbf{Proactive Part.} Number of nodes in a topology are randomly elected as beacons for a pre-defined period. Then, all nodes attempt to find routing paths to all beacons through Breadth-First-Search (BFS) algorithm. Detailed information of routing paths to each beacon is stored in each node's routing table. The routing paths information tends to remain unchanged except a new node joins the Layer2 Network or a new payment channel is set up. In next period, the election process is executed once again. The proactive routing process then is executed repeatedly, and new information is hence stored in each node instead of the elder one.      
	\item \textbf{Reactive Part.} There are some dynamic information in the Layer2 Network, such as the distribution of deposits in each payment channel which consumes lots of computing resources (e.g. CPU) if gathered in real time. In Rapido, instead of gathering all information in real time, the customer and the merchant only gather the dynamic information of each payment channel among the pre-stored routing paths. Moreover, the customer and merchant request the dynamic information of the routing paths to all intermediate nodes. As intermediate nodes, they response the requests by the reactive part of routing. Finally, the gathered information is sent back to the customer. Although the nodes among these routing paths can response the request from the customer and merchant by Tor \cite{lnd, tor}, the Tor achieves a low performance, such as about $5$ seconds to complete $50$kb data request to onion sever \cite{tormetric}. This issue will be discussed in \ref{privacydiss}.
\end{itemize}

\subsection{Value Distributing}
The customer gathers the information of the available routing paths when a transaction is generated, such as the distribution of deposits on these routing paths.
Once information gathering process is finished, the customer can split the payment value into $s$ shares by Linear Program and pay to the merchant through each routing path, respectively. 

In this part, we proceed to formulate the criterion for congestion that represents the state of a node's deposit in a payment channel. Inspired by the congestion factors \cite{factor1,factor2}, we introduce a metric \textit{channel congestion factor} $\mu_{ij} = {{{P_i}} \mathord{\left/
 {\vphantom {{{P_i}} {deposi{t_{ij}}}}} \right.
 \kern-\nulldelimiterspace} {deposi{t_{ij}}}}$, where $P_i$ represents the $i$th share is transferred through $Path_i$ and $deposi{t_{ij}}$ represents the deposit owns to the $j$th intermediate node deposits in $(j-1)$th payment channel over $Path_i$, $i \in [1,s]$ and $j \in [1, l]$. In addition, a \textit{channel congestion factor} can predict the skewness of a node after a payment is performed. According to \cite{factor1} and \cite{factor2}, we introduce another metric \textit{network congestion factor} $\mu = {\max _{{i \in s},{j \in l}}}\mu_{ij}$ which indicates the bottleneck of the whole network. For the value distributing issue, it can be equivalent to the well-known Maximum Flow Problem \cite{mfp} \textit{iff} there are no other restrictions on the routing paths. Therefore, one can find a solution that minimizes the \textit{network congestion factor} in polynomial time through a standard max-flow algorithm \cite{mfp}. However, the cumulative transfer fees in a routing path is one of essential metrics for the customer, which should be considered in the value distributing process.

For this problem, we aim at minimizing the \textit{network congestion factor} $\mu$ which subjects to a restriction on a transfer fees according to the chosen paths. It is formulated as follows.

\textbf{VDP (Value Distributing Problem).} Given a Layer2 Network $\mathbb{G} = \left( {\mathbb{V},\mathbb{E}} \right)$, a source-destination pair ($\mathcal{C}, \mathcal{M} \in \mathbb{V}$), each participant within a transaction sends their transfer fees $fee_v$ information and the deposits $deposi{t_e}$ to $\mathcal{C}$ $(v \in \mathbb{V}, e \in \mathbb{E})$. Besides, a payment has a transfer fee restriction $\rho$. 
Finally, we find a feasible solution to distribute $s$ shares that minimize the \textit{network congestion factor}.

\begin{theorem}
	\textbf{VDP is NP-hard.}
\end{theorem}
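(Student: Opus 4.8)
The plan is to prove NP-hardness by exhibiting a polynomial-time many-one reduction from a classical NP-complete problem to the decision version of VDP. First I would phrase VDP as a decision problem: given the network $\mathbb{G}$, the pair $(\mathcal{C},\mathcal{M})$, the fees $fee_v$, the deposits $deposit_e$, the fee bound $\rho$, and a target congestion level $\mu^\ast$, decide whether there exists a distribution of the $s$ shares whose cumulative fee does not exceed $\rho$ and whose network congestion factor $\mu = \max_{i,j}\mu_{ij}$ is at most $\mu^\ast$. NP-hardness of this decision problem immediately yields NP-hardness of the optimization problem stated in the theorem, since any algorithm for the optimum could answer the decision question.

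The key observation, already noted just before the theorem, is that once the fee restriction $\rho$ is dropped the problem collapses to ordinary max-flow (equivalently a linear program) and is solvable in polynomial time; hence all of the hardness must be forced by the budget $\rho$ acting on the per-path fees. This is precisely the signature of a Knapsack-type constraint, so I would reduce from the $0/1$ Knapsack problem (or, for a cleaner instance, from Partition/Subset-Sum). Given items with weights $w_1,\dots,w_n$, profits $v_1,\dots,v_n$, capacity $W$ and target $V$, I would build a VDP instance on a graph consisting of $n$ internally disjoint paths from $\mathcal{C}$ to $\mathcal{M}$, one path per item. On the $i$th path I set the deposits so that its bottleneck capacity (the quantity $\mu^\ast\cdot\min_j deposit_{ij}$ implied by the target congestion) equals $v_i$, and I set the fee charged for using that path equal to $w_i$. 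Taking the fee budget $\rho = W$, the required total payment $\sum_i P_i = V$, and $\mu^\ast$ fixed, a feasible distribution then corresponds exactly to selecting a subset of paths whose capacities sum to at least $V$ while their fees sum to at most $W$, i.e. a solution to the Knapsack instance. Both directions of the equivalence follow by reading this correspondence forward and backward, and the construction is clearly polynomial in the size of the Knapsack instance.

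The main obstacle is making the subset selection genuinely discrete rather than fractional: if a path could be used at an arbitrary fractional level while paying fee only in proportion to the flow it carries, the instance would degenerate into the polynomially solvable linear program mentioned above and no hardness would survive. I would resolve this by exploiting the fixed (base / per-hop) component of the LN fee model, so that using a path at all incurs its full fee $w_i$ regardless of how much value it carries; this fixed-charge structure is what pins each path to a binary in/out decision and reproduces the $0/1$ nature of Knapsack. A secondary point to handle carefully is the min-max form of the objective: I would verify that fixing the single scalar target $\mu^\ast$ converts each path's congestion constraint into a clean capacity bound $f_i \le c_i$, so that the maximum over $i,j$ introduces no spurious coupling between paths. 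Finally I would note that the decision version lies in NP, since a claimed share distribution is checkable in polynomial time by summing the fees and evaluating every $\mu_{ij}$; the reduction therefore in fact establishes NP-completeness of the decision problem and hence NP-hardness of VDP.
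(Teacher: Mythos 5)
Your proposal is correct and lands in the same family as the paper's own argument: both establish hardness by reducing from a number-partitioning problem, with the fee budget $\rho$ acting on fixed per-node charges as the sole source of intractability. The gadgets differ, though. The paper reduces from Partition: the elements $\alpha_i$ become intermediate nodes with transfer fee $S(\alpha_i)$, the payment is split into exactly two shares routed over the complementary node sets $A'$ and $A \setminus A'$, and channel capacities equal to $P/2$ force $P_1 = P_2 = P/2$ while the fee bound forces the two path-fee sums to equal $\rho/2$ each; the general case of $s \ge 2$ shares is then handled by appeal to $k$-Partition. You instead reduce from Knapsack with one internally disjoint $\mathcal{C}$--$\mathcal{M}$ path per item (bottleneck capacity $=$ profit, path fee $=$ weight), so that a feasible share distribution is exactly a subset selection. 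Your route buys three things the paper's write-up lacks: (i) the reduction direction is stated correctly (classical problem $\to$ VDP; the paper literally writes that the simple VDP ``can be reduced to a Partition Problem,'' which is backwards for proving hardness); (ii) you isolate the fixed-charge fee structure as the feature that blocks the fractional/LP degeneracy --- this is the real crux, it is consistent with constraint $(5)$ of Program VDP, and the paper never addresses it; (iii) you observe that the decision version is in NP, upgrading the conclusion to NP-completeness. What the paper's route attempts to buy is strong NP-hardness via $k$-Partition, whereas your Knapsack (or Subset-Sum) reduction gives only weak NP-hardness; but since Partition itself is only weakly NP-hard, the paper's two-share base case delivers no more than yours, and the theorem as stated requires no more.
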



\begin{proof}
Consider the following Partition Problem \cite{kpp}. Given a set $A=\left\{ {{\alpha _1},{\alpha _{2,}} \cdots ,{\alpha _{n}}} \right\}$. Moreover, the size of $\alpha_i$ can be represented as $S({\alpha_i}) \in {\mathbb{Z}^+ } $. Then, we can find a subset $A' \subseteq A$ such that $\sum\nolimits_{\alpha  \in A'} {S(\alpha )}  = \sum\nolimits_{\alpha  \in A - A'} {S(\alpha )} $. We transform the Partition Problem to a simple VDP as follows.
\begin{enumerate}
	\item Given an element $\alpha_i \in A$ with size $S(\alpha_i)$. In a Layer2 Network, we suppose a transaction between $\mathcal{C}$ and the $\mathcal{M}$, where $P$ would be transferred and the set $\{ A \} $ constitute all intermediate nodes. We define a transfer fee of node $i$ as $S(\alpha_i)$.
	\item Suppose that $P$ is split into two parts ($P_1$ and $P_2$) for transferring through each routing path, respectively. Suppose that $P_1$ is transferred from $\mathcal{C}$ to $\mathcal{M}$ through a routing path $\{ A' \} $ and $P_2$ is transferred through another routing path $\{ A-A' \} $. Besides, we assume that each channel's capacity is ${\raise0.7ex\hbox{$P $} \!\mathord{\left/
 {\vphantom {P  2}}\right.\kern-\nulldelimiterspace}
\!\lower0.7ex\hbox{$2$}}$. 
\end{enumerate}

We shall prove that it is possible to transfer $P_1$ and $P_2$ through each routing path, that the total of transfer fees are not exceed $\rho$ and each payment channel's \textit{channel congestion factor} $\mu$ is not exceed to $1$ \textit{iff} there exits a subset $A' \subseteq A$ and exits $\sum\nolimits_{\alpha  \in A'} {S(\alpha )}  = \sum\nolimits_{\alpha  \in A - A'} {S(\alpha )}  = {\raise0.7ex\hbox{$\rho $} \!\mathord{\left/
 {\vphantom {\rho  2}}\right.\kern-\nulldelimiterspace}
\!\lower0.7ex\hbox{$2$}}$.

Note that if variables $x_1 , x_2$ satisfy $x_1 \le   {\raise0.7ex\hbox{$\rho $} \!\mathord{\left/
 {\vphantom {\rho  2}}\right.\kern-\nulldelimiterspace}
\!\lower0.7ex\hbox{$2$}}$ and $x_2 \le {\raise0.7ex\hbox{$\rho $} \!\mathord{\left/
 {\vphantom {\rho  2}}\right.\kern-\nulldelimiterspace}
\!\lower0.7ex\hbox{$2$}}$, it follows that $\max \left( {{x_1} + {x_2}} \right) = \rho$ and in addition $x_1 =x_2 = {\raise0.7ex\hbox{$\rho $} \!\mathord{\left/
 {\vphantom {\rho  2}}\right.\kern-\nulldelimiterspace}
\!\lower0.7ex\hbox{$2$}}$. We hence conclude that $\sum\nolimits_{\alpha  \in A'} {S(\alpha )}  \le {\raise0.7ex\hbox{$\rho $} \!\mathord{\left/
 {\vphantom {\rho  2}}\right.\kern-\nulldelimiterspace}
\!\lower0.7ex\hbox{$2$}}$ and $\sum\nolimits_{\alpha  \in A - A'} {S(\alpha )} \le {\raise0.7ex\hbox{$\rho $} \!\mathord{\left/
 {\vphantom {\rho  2}}\right.\kern-\nulldelimiterspace}
\!\lower0.7ex\hbox{$2$}}$, it follows that $\max \left( {\sum\nolimits_{\alpha  \in A'} {S(\alpha )} + \sum\nolimits_{\alpha  \in A - A'} {S(\alpha )}} \right) = \rho$ and in addition $\sum\nolimits_{\alpha  \in A'} {S(\alpha )} = \sum\nolimits_{\alpha  \in A - A'} {S(\alpha )} =  {\raise0.7ex\hbox{$\rho $} \!\mathord{\left/
 {\vphantom {\rho  2}}\right.\kern-\nulldelimiterspace}
\!\lower0.7ex\hbox{$2$}}$. 
In this issue, $P$ can be split into two parts as $P_1= {\raise0.7ex\hbox{$P$} \!\mathord{\left/
 {\vphantom {P 2}}\right.\kern-\nulldelimiterspace}
\!\lower0.7ex\hbox{$2$}}$ and $P_2= {\raise0.7ex\hbox{$P$} \!\mathord{\left/
 {\vphantom {P 2}}\right.\kern-\nulldelimiterspace}
\!\lower0.7ex\hbox{$2$}}$ that the \textit{channel congestion factor} $\mu  \le  1$ is not be violated. We hereby find two part $P_1= P_2= {\raise0.7ex\hbox{$P$} \!\mathord{\left/
 {\vphantom {P 2}}\right.\kern-\nulldelimiterspace}
\!\lower0.7ex\hbox{$2$}}$ that the total fees restrict to $\rho$. The simple VDP hereby can be reduced to a Partition Problem such that the positive integer $S(\alpha_i)$ can be seem as a transfer fee of the node $i$. 

The Partition Problem is known as strongly NP-hard \cite{kpp}. We note that the VDP has more restrictions than the simple VDP such that a payment value $P$ could be split into $s   \ge 2$ shares and transferred through $s$ routing paths from a customer to a merchant after a transaction is generated in the Layer2 Network.
The VDP hereby can be considered as a k-Partition Problem \cite{kpp} which is known as strongly NP-hard \cite{kpp}. Consequently, the \textbf{VDP is NP-hard}.
\end{proof}

Owing to the VDP is NP-hard  without polynomial time algorithm. To resolve VDP, it should be relaxed to a non-integer linear programming to obtain a approximate solution. The first step towards obtaining a solution to Problem VDP is to define it as a linear programming. To that end, we need some additional notations.

\textbf{Program VDP:}
\begin{alignat}{2}
\label{lp}
\mbox{minimize} \quad & \mu  \\
\mbox{s.t.} \quad & \sum\limits_{e \in Out(v)} {f_e^\lambda }  - \sum\limits_{e \in In(v)} {f_e^{\lambda  - Fe{e_v}}}  = 0   \\  
\quad & \sum\limits_{e \in Out(\mathcal{C})} {f_e^\lambda }  - \sum\limits_{e \in In(\mathcal{C})} {f_e^{\lambda}}  = 0  \\
\quad & \sum\limits_{e \in Out(\mathcal{M})} {f_e^\lambda }  - \sum\limits_{e \in In(\mathcal{M})} {f_e^{\lambda}}  = 0  \\
\quad & \sum\limits_{v \in R} {Fe{e_v}}  \le \rho  \\
\quad & P = {P_1} +  \cdots  + {P_s}  \\
\quad & {P_i} < deposi{t_{ij}}  \\
\quad & \mu_{ij}  \ge {\mu _{threshold}}  \\
\quad & f_e^\lambda  \ge 0     \\
\quad & \mu \ge 0  
\end{alignat}

Recall that we are given a network $\mathbb{G} = \left( {\mathbb{V},\mathbb{E}} \right)$, $\left| \mathbb{V} \right| = N$ and $\left| \mathbb{E} \right| = M$, source-destination pair $\mathcal{C}, \mathcal{M} \in \mathbb{V}$, a payment is performed $P$ between $\mathcal{C}$ and $\mathcal{M}$ and go through $R \in N$ nodes, a transfer fee $Fee_v > 0$ for each node $v \in \mathbb{V}$, and a node's deposit in its former channel (i.e., the $j$th node's deposit of the $(j-1)$th channel in $Path_i$) $deposit_{ij}$ for each payment channel $e \in \mathbb{E}$, total of transfer fees restriction $\rho$ for a payment. Let $\mu$ be the \textit{network congestion factor}. Denote by $f_e^\lambda $ the flow along $e=(u,v) \in \mathbb{E}$ that has been routed from $\mathcal{C}$ to $u$ through paths with a total fees $\lambda$. Finally, for each $v \in \mathbb{V}$, denote by $Out(v)$ the set of channels that out from $v$, and by $In(v)$ the set of channels that input node $v$. To mitigate a congestion issue and skewed node issue, a payment is performed \textit{iff} $\mu_{ij}  \ge {\mu _{threshold}}$ where ${\mu _{threshold}}$ is pre-defined. Therefore, the VDP can be formulated as a linear program, as specified in \textbf{Program VDP}.

%

The objective function of \textbf{Program VDP} minimizes the\textit{ network congestion factor} $\mu$. The constraint equations $(2)$, $(3)$ and $(4)$ denote that the conservation constraint of the payments and transfer fees through each node. Equation$(2)$ states that the cumulative fees of a payment out of node $v$ has to be equal to the cumulative fees before this payment inputs node $v$ and the transfer fees for node $v$. For the equations$(2)$ and $(3)$, any payment from a $\mathcal{C}$ to a $\mathcal{M}$, in which the transfer fees do not need to pay a $\mathcal{C}$ or a $\mathcal{M}$. The inequation $(5)$ states that the constraint of a payment's total of transfer fees. Equation$(6)$ states that the sum of the shares must be equal to the payment value. For the inequation $(7)$, the deposits in the $(j-1)$th payment channel owns to the $j$th node which is in the $Path_i$ must larger than the $i$th share. To mitigate the skewness and congestion issue, the inequation $(8)$ states that the \textit{payment channel congestion} $\mu_{ij}$ must be not less than a pre-defined threshold $\mu_{threshold}$. Finally, the inequation $(9)$ and $(10)$ restrict the variables to be nonnegative.

According to the result of Program VDP, the customer can splits the payment value into shares. However, there might exist some intermediate nodes disapprove of this payments. Therefore, we introduce a \texttt{request} mechanism which is executed before distributing these shares. In \texttt{request} mechanism, the customer proposes a proposal and requests each node among these selected routing paths such that, whether participating this payment as an intermediate node. The customer is going to perform this payment \textit{iff} each node agrees this proposal with the customer. Otherwise, the customer aborts these selected routing paths and reduplicate the \textit{Routing} process. The \texttt{request} mechanism can mitigate some extra costs that an intermediate node might abort his share payment when a payment is performing.

\subsection{D-HTLC}

D-HTLC is a smart contract which is based on HTLC \cite{lightning}. In D-HTLC, we introduce a \texttt{punish} mechanism to avoid some traps.For example, nodes on routing paths might maliciously abort before payment, resulting in the reset of D-HTLC and, unavoidably, the waste of extra costs. To solve this problem, we lock a certain amount of $cash$ for each node until a D-HTLC is set up for this payment.

\subsubsection{Notation}
Input a network $\mathbb{G} = \left( {\mathbb{V},\mathbb{E}} \right)$, suppose that a customer $\mathcal{C} \in \mathbb{V}$ attempts to transfer $P$ coins to a merchant $\mathcal{M} \in \mathbb{V}$. Then, $P$ has been spilt into $s$ shares $listP_{share} = \left\{ {{P_1}, \cdots ,{P_i}, \cdots ,{P_s}} \right\}$ by Program VDP, that each share has its own routing path $listPath = \left\{ {{Path_1}, \cdots ,{Path_i}, \cdots ,{Path_s}} \right\}$ to $\mathcal{M}$, such as the $i$th share $P_i$ is paid to $\mathcal{M}$ through $pat{h_i}$. For each routing path, $\mathcal{C}$ generates a $H_i =$\texttt{SHA-256}$(R_i)$. Besides, the $node_{ij}$ means that the $j$th node is on $pat{h_i}$. The D-HTLC $ID_{ij}$ is deployed between $node_{ij}$ and $node_{i(j+1)}$, which is composed of sender $node_{ij}$, receiver $nod{e_{i\left( {j + 1} \right)}}$, the $i$th value $P_i$, transfer fee $fee_{ij}$, hashlock $hLk_i$ and timelock $tLk_j$. In addition, $node_{ij}$ should prepare a $cash_{ij}$ according its own transfer $fee_{ij}$ (base fee and fee rate). We represent $listCash$ as the list of all $cash$s.

\subsubsection{Definition of Operations}

  D-HTLC has $6$ main operations: \texttt{\textbf{openPunish}}, \texttt{\textbf{newContract}}, \texttt{\textbf{getBack}}, \texttt{\textbf{withdraw}}, \texttt{\textbf{refund}} and \texttt{\textbf{punish}}. We briefly describe these operations as follows.

\begin{itemize}
	\item \texttt{\textbf{openPunish}}$\left( {\mathcal{C},listNode,listCash} \right).$ When a client $\mathcal{C}$ attempts to initialize a payment, the \texttt{\textbf{openPunish}} is called. Nodes that responses request from $\mathcal{C}$ are listed in $listNode$. This function is responsible for locking a certain amount of cash which are listed in $listCash$ from every involved nodes. It returns $1$ \textit{iff} all nodes in $listNode$ have locked the corresponding amount of $cash$; otherwise, it returns 0.   
	\item \texttt{\textbf{newContract}}$\left( {I{D_{i\left( {j - 1} \right)}},nod{e_{ij}}, fe{e_{ij}}, hLk_{i},tLk_{ij}} \right).$ This operation constructs a new D-HTLC from  from the $node_{ij}$ to the $nod{e_{i\left( {j + 1} \right)}}$. It returns $ID_{ij}$; otherwise, it returns $0$. Note that sequence between several \texttt{\textbf{newContract}}s should be consist with the sequence of nodes on the path.
	\item \texttt{\textbf{getBack}}$\left( {nod{e_{ij}},cas{h_{ij}}} \right).$ This operation is called when $node_{ij}$ is trying to retrieve its locked $cash_{ij}$ after  \texttt{\textbf{newContract}} succeeds and $ID_{ij}$ is returned.
	\item \texttt{\textbf{withdraw}}$\left( {I{D_{ij}},R_i} \right).$ This operation is called when $nod{e_{i\left( {j + 1} \right)}}$ attempts to withdraw $P_i$ from the previous $node_{ij}$ within $ID_{ij}$. It succeeds \textit{iff} $nod{e_{i\left( {j + 1} \right)}}$ is able to reveal the correct $R_i$ to $node_{ij}$. 
	\item \texttt{\textbf{refund}}$\left( {I{D_{ij}},{P_i}} \right).$ If there was no \texttt{\textbf{withdraw}} until $tLk_{ij}$ is expired, \texttt{\textbf{refund}} is called to refund $P_i$ to $node_{ij}$. 
	\item \texttt{\textbf{punish}}$\left( {\mathcal{C},nod{e_{ij}},cas{h_{ij}}} \right).$ As previously mentioned, we have a punish mechanism to penalize dishonest nodes. Thus, this punish function is called once a new D-HTLC aborts because of $node_{ij}$, then $\mathcal{C}$ would get $cash_{ij}$. For the latter nodes, the remaining locked $cash$s are unlocked.
\end{itemize}

 \subsection{Remark}
 We design a Layer2 system, Rapido, which is different from the current LN. At the first, we design a new routing discovering algorithm for Rapido, which incents nodes always online. Secondly, Rapido splits the payment value into several shares and distributes these shares to a node by D-HTLC. Thirdly, D-HTLC is designed based on HTLC while D-HTLC inherently preserves the privacy of payment value.

\section{System Discussion}
\label{discussion}
In this part, we discuss the scenario where Rapido can be used as well as its limitations..

\subsection{Usability}

\subsubsection{Context}
Employing the Layer2 Network makes transactions more rapid and convenient than the transactions are performed on-chain. As summarized in the above discussion, LN has some issues while Rapido can mitigate them. In particular, for the \textit{\textbf{shares}} issue, the naive solution introduces an extra time and money costs whereas Rapido can eliminate these costs, which is not require to claim in blockchain. 

Different from HTLC, a payment is performed by D-HTLC should be split into several shares and distributed to the merchant through selected routing paths. In other words, a payment is performed by D-HTLC requires more participants. According to the statics of LN from \cite{lne}, $2681$ nodes that each node has $8$ payment channels on average. Hence, the  condition provides a suitable context for Rapido. Moreover, the Layer2 Network such that LN's scale is increasing from its birth to the present. The Layer2 Network with its growth, the future context is also suitable for Rapido.

\subsubsection{Scalability}
With the scale expansion of the Layer2 Network, more and more nodes and channels would be generated. For the proactive part of routing, it is $O\left( n \right)$. Moreover, the beacon election is only executed once in each period, which is not always consumed computing resource. Moreover, the pre-defined period of election is not short. In reactive part, for performing a payment, the customer calculates and selects available routing paths through several beacons and then pay the split payment shares to the merchant. Considering the number of beacons is always much less than the total number of nodes, the process of routing path selection is rapid unless the number of beacons grows clipping.    

\subsection{Security Analysis}

\subsubsection{Threat Model}
We suppose that an adversary would be willing to intercept some information from a transaction between a customer and a merchant among the Layer2 Network. A transaction over the Layer2 Network might go through serval intermediate nodes, whereas security flaws of some nodes can be found and may be leveraged by an adversary. Therefore, a payment value can be intercepted if one of intermediate nodes is compromised. An adversary in our paper targets the privacy of payment value, which is crucial as the other security issues (e.g., anonymity of customer/merchant) when a payment is performing on the Layer2 Network.   

\subsubsection{Guarantees for privacy}
\label{privacydiss}
Under the previous LN discussion and adversarial assumption, an adversary can easily intercept the payment value if one of intermediate nodes is compromised. In Rapido, the adversary cannot intercept the total payment value unless it at least compromises one intermediate node on each routing path. This is because the payment value is split into several shares and distributed to the merchant through numerous paths by the customer while the adversary can only intercept a part of payment value. Moreover, a payment based on D-HTLC includes more routing paths as improving difficulty level to compromise all intermediate nodes. Hence, it is difficult that an adversary could intercept a complete information of payment value. In addition, the Rapido also could confuse the adversary, which only leaks a part of information. Rather than Fulgor and Rayo \cite{concurrencyandprivacy}, Rapido need not introduce an extra trust function to preserve the privacy of payment value and make a assumption that all participates are honest.   

 In addition, any node can requests information of payment channel to perform the payment in reactive part. Although the privacy of information can be transmitted and protected by optional Tor \cite{lnd, tor}, it achieves a low performance \cite{lnd, tormetric}. We note that it might cause some information leakage in Rapido, if the Tor is turnoff. Considering that lots of nodes are elected as beacons, there are a majority of nodes would be as intermediate nodes when a payment is performed. Moreover, lots of payments might be generated on the Layer2 Network in a short time. Therefore, the information, e.g., the node's deposits on each payment channel, is rapidly changing and out date quickly. Consequently, Rpaido theoretically enables information leakage with a low probability.

\section{Simulation}
\label{experiment}

In this section, we present extensive experiments to evaluate Rapido. To be more convincing, we provide a working proof of concept (POC) implemented in Python with a real dataset of LN. 

\subsection{Dataset and Setup}
The experiments were conducted on machines with Intel i$7$ $8700$K and $32$GB RAM. For simulating a Layer2 Network, we crawled a dataset with $2,681$ nodes and $7,347$ payment channels from LN of Recksplorer \cite{lne} on July $2018$. Besides, detailed information of each node and payment channel are also demonstrated, such as capacity, base transfer fee, rate of transfer fee and so on. In common, the $20$\% people always own $80$\% of all wealth \cite{pareto}, which also can be analogous to LN. Based on the fact that there usually exits some super ``huge'' nodes which owns lots of payment channels ($60+$) and thus deposits (e.g., ``tady je slushovo'' \cite{lne}), we assume that the distribution of deposits on each channel follows the Pareto distribution \cite{pareto}. Under this assumption, we suppose that two nodes among a payment channel, where the node who own more channels has $80$\% deposits. We simulate a simple blockchain by golang, which has links open on our Layer2 Network. 
Based on these statics, assumptions and setup, we derive datasets for our simulation which are represented as following.

\subsection{Beacon Election and Average Routing Hops}
 In our work, the routing algorithm includes proactive part and reactive part. In proactive part, some beacons are elected, and then each node finds a routing path to each beacon. The left half of Fig. \ref{Fig:beacon} represents that the average hops between any two nodes among the Layer2 network topology. The label of X-axis ($5$, $50$, $200$ and $500$) represents numbers of beacon nodes. We can find that the $5$, $50$, $200$ and $500$ labeled bars are of about the same height, with the average hops of about $5.1$. However, the more beacons are elected, the more routing information should be stored in routing table each node. To avoid routing table overhead and meanwhile incent more nodes to be online, we empirically elect $200$ beacons in topology.
 
 On the other hand, the right half of Fig. \ref{Fig:beacon} represents the routing algorithm of current LN. We notice that the average hops between any two nodes of current LN is about $3.6$, which is less compared with our approach. This stems from that the route discovering algorithm of current LN does not need go through any beacons. Under some circumstances, a \textit{\textbf{shares}} issue might appear, which cannot be fulfilled by current LN whereas it might be fulfilled by Rapido. In addition, each node has a chance to be elected as a beacon, which makes all nodes possible to earn transfer fees and thus incents them to be online more often. Furthermore, Rapido can preserve the privacy of payment value without any trust function. Under estimating a transfer fee, the customer extra pays about $10$ Satoshis (sato)\footnote{$1$ bitcoin = $1 \times {10^8}$ Satoshis, $1$ Satoshi $ \approx $ $8 \times {10^{ - 5}}$\$ \cite{blockconfirm}.} fees to a node in general when a value of $1,000$ satos payment goes through a node, which is extremely small. Therefore, the quantitative difference of average hops is acceptable.

\begin{figure}
\centering
\includegraphics[scale=0.55]{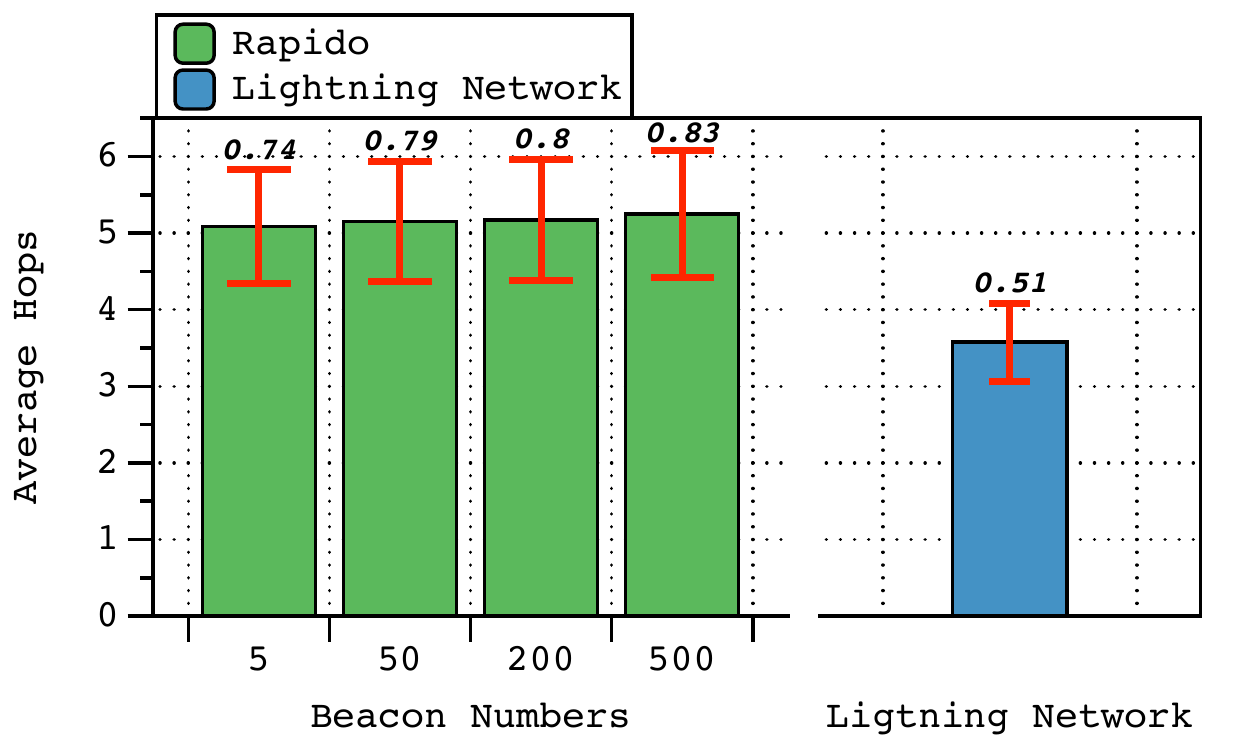}
\caption{Average hops statics between any two nodes, which is simulated on Rapdio and LN, respectively.}
\label{Fig:beacon}

\end{figure}

\subsection{Transaction Scenarios}

\subsubsection{Scenario 1}

We investigate the success rate of micropayments on LN and Rapido respectively through simulation. According to the above mentioned statics and assumptions, a series of micropayments are generated between randomly selected customer-merchant pairs (the selected customer owns enough deposits in all his payment channels to perform the payments). Considering that in a LN with $2,681$ nodes \cite{lne}, the probability of concurrent payment is pretty low, thus we does not take concurrent payments into account in our simulation.

Fig. \ref{Fig:fee} shows average transfer fees under different payment value (we empirically choose $10,000$, $25,000$, $50,000$ and $100,000$ satos according to the nature of micropyment) and different systems (yellow and green bars denote for Rapido, blue ones denote for LN). We attribute large error bars to the randomness of nodes selection strategy, in which the routing paths between selected nodes vary a lot. More than one group of fee restrictions were set for Rapido, and the results showed that adaptive restrictions ($30$, $60$, $80$, $100$) can achieve a lower average transfer fee compared with the fixed one ($200$, $200$, $200$, $200$), and achieves almost the same average transfer fee with which in LN. Fig. \ref{Fig:transaction} shows the corresponding success rate. With similar transfer fees, Rapido always achieves a high success rate than LN. For LN, the customer require to find the available shortest routing path when a transaction is generated. Additionally, there is no guarantee that an available routing path should exit for a payment. For Rapido, benefiting from the VDP program, the payment value by Rapido can be split into shares and pay to the merchant respectively. 

Furthermore, we note that Rapido can achieve a higher success rate if the restriction of transfer fee is relaxed (as the yellow bars in histogram compared with the green ones). However, we observe that $100$\% success rate is not achieved, due to the fact that once all nodes attempt to connect with all beacons, then the beacons might not enough deposits to perform the payment. In addition, combined with Fig. \ref{Fig:fee}, we can note that not large transfer fee increment (about $10\sim 160$ satos $8 \times {10^{ - 4}} \sim 1.28 \times {10^{ - 2}}$\$) could bring a large improvement on success rate. Therefore, to achieve a higher success rate, the transfer fee restriction can be relaxed.

\begin{figure}
\centering
\includegraphics[scale=0.55]{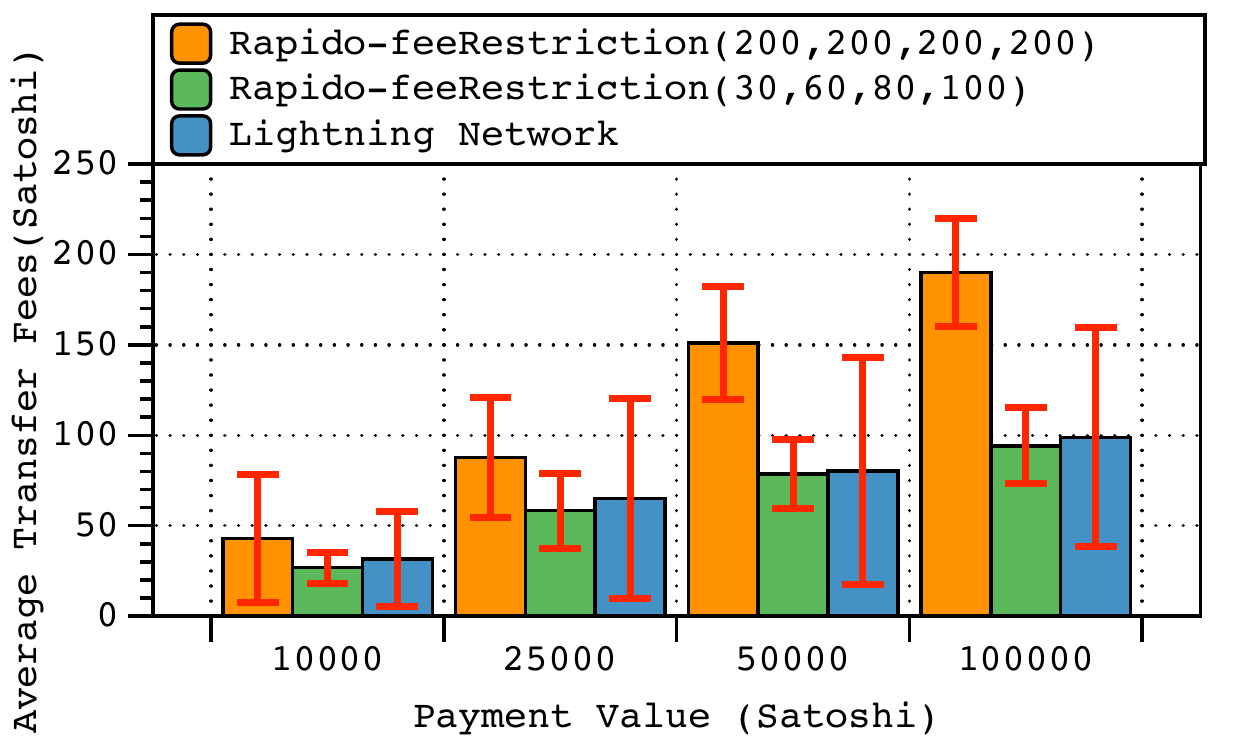}
\caption{The average and standard deviation of transfer fees based on Rapdio and LN, respectively.}
\label{Fig:fee}

\end{figure}

\begin{figure}
\centering
\includegraphics[scale=0.55]{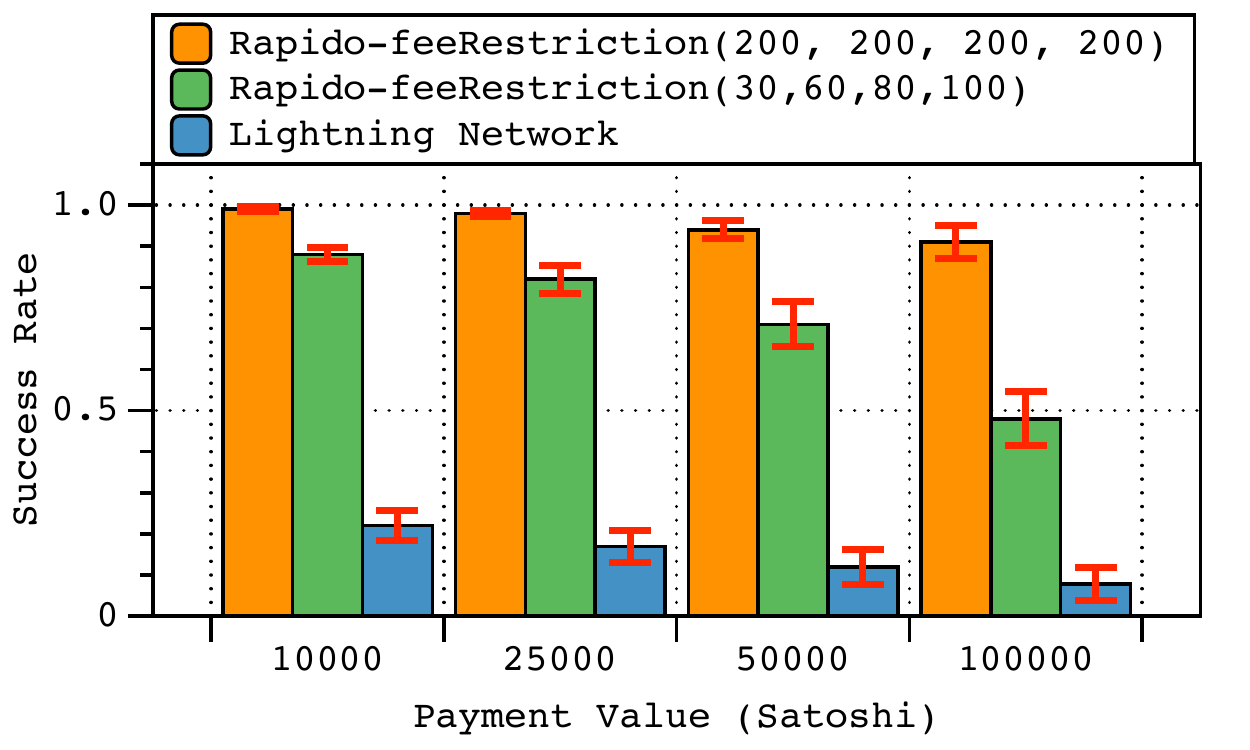}
\caption{The success rate of payments which are based on Rapdio and LN, respectively.}
\label{Fig:transaction}

\end{figure}

\subsubsection{Scenario 2}

In this scenario, we randomly select $20$ customer-merchant pairs, each of which independently generates payments of $10,000$, $25,000$, $50,000$ and $100,000$ satos respectively for $15$ times. Note that the payments can only be initiated by customers and then received by merchants. In addition, we assume that the selected customers has enough deposits to fulfill the payments. Under this assumption, both customers and merchants should be huge nodes which both have $60+$ payment channels

To better describe a skewed node, we introduce a $skewness$ metric to denote the skewness of a node in a certain direction. Concretely, a payment of value $z$ attempts to go through a node with an input channel and an output channel that the deposit named $Z_{in}$ and $Z_{out}$ respectively. Thus the skewness on the direction from input to output is ${{{Z_{out}}} \mathord{\left/
 {\vphantom {{{Z_{out}}} {{Z_{in}}}}} \right.
 \kern-\nulldelimiterspace} {{Z_{in}}}}$. Since the node earns a transfer fee of $Fee$, $skewness$ then becomes ${{\left( {{Z_{out}} - z} \right)} \mathord{\left/
 {\vphantom {{\left( {{Z_{out}} - z} \right)} {\left( {{Z_{in}} + z + Fee} \right)}}} \right.
 \kern-\nulldelimiterspace} {\left( {{Z_{in}} + z + Fee} \right)}}$ until the payment is done. Apparently, if several payments with large value goes through this node, its $skewness$ could decrease badly, which means that the node tends to be ``very skewed''. In this paper, we call a node is seriously skewed if its $skewness$ is less than $0.01$. Furthermore, if there are $n$ nodes seriously skewed among all $m$ nodes which involved in a payment, we say that the ratio of serious skewed is ${\raise0.5ex\hbox{$\scriptstyle n$}
\kern-0.1em/\kern-0.15em
\lower0.25ex\hbox{$\scriptstyle m$}}$.

Fig. \ref{Fig:skew} shows the ratio of serious skewed node after $15$ rounds of above mentioned payments from customers to merchants. We find out that there exit skewed nodes even when both sides of payments are huge nodes, and the ratio of seriously skewed nodes tends to be higher in LN than in our Rapido implementation. Our experiment proves that Rapido is capable of mitigating the skewness of nodes and congestion after several payments.

\begin{figure}
\centering
\includegraphics[scale=0.55]{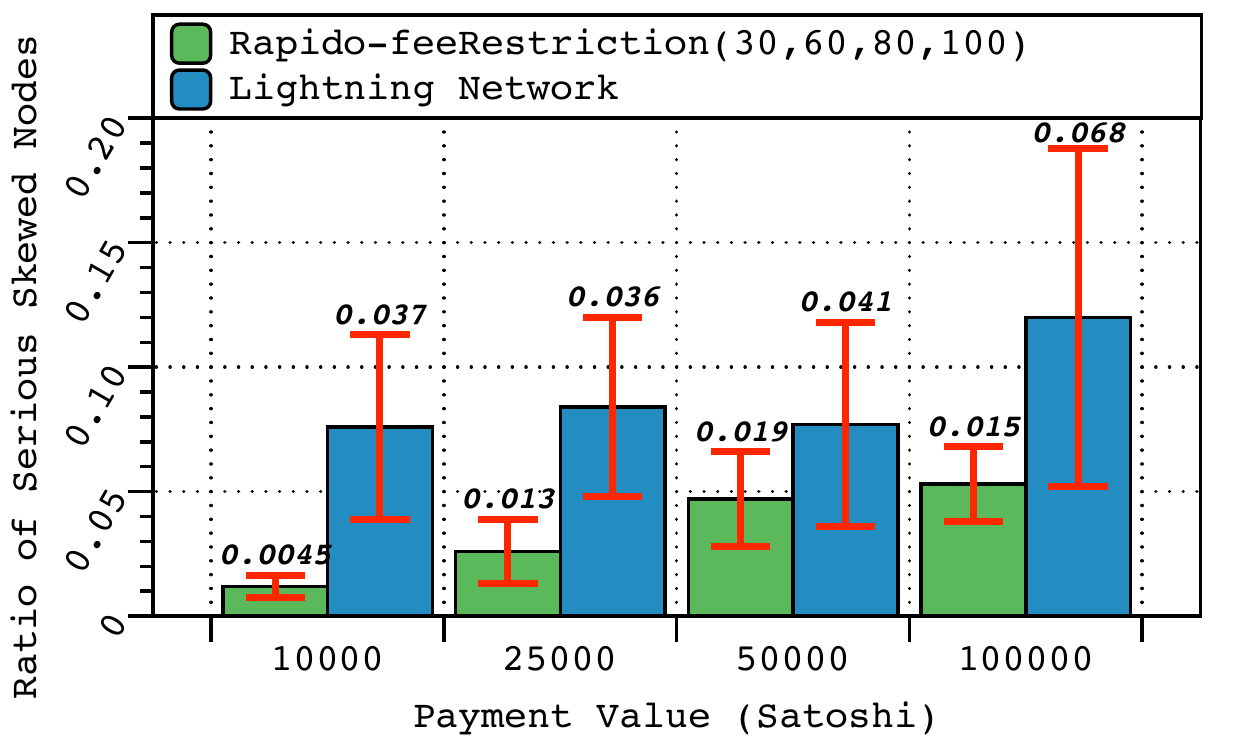}
\caption{The ratio of serious skewed nodes based on Rapdio and LN, respectively.}
\label{Fig:skew}

\end{figure}

\section{Related Work}
\label{related}

There are some limitations on LN and some prior research works have studied on them. Most of the works mainly focus on customer/merchant anonymity. Heilman proposed TumbleBit which allows parties to make fast, anonymous, off-blockchain payments through an untrusted intermediary called the Tumbler \cite{tumblebit}. Green and Miers represented Bolt that allows for secure, instantaneous and private payments that substantially reduce the storage burden on the LN \cite{bolt}. Malavolta et al. proposed Fulgor and Rayo \cite{concurrencyandprivacy} that Fulgor and Rayo are payment protocols for LN that can protect the balance security, payment value privacy and customer/merchant anonymity based on a trust function. Fulgor is a blocking protocol and therefore prone to deadlocks of concurrent payments, whereas Rayo is a protocol that enforces non-blocking progress (i.e., at least one of the concurrent payments terminates). 
In addition, some research works studied on the applicability of LN such that a skewed LN might be generated after multiple transactions. Khalil and Gervais proposed REVIVE that allows an arbitrary set of user in LN to rebalance their nodes, according to the preferences of the nodes \cite{revive}. REVIVE can only be leveraged to solve a skewed loop network, in which might cause an increased collateral cost on payment routing path.


\section{Conclusion}
\label{conclusion}
The idea of the Layer2 Network is proposed to resolve the bitcoin scalability problem, which theoretically enables fast transactions between participates. However, there exists some drawbacks. In this paper, we obeserve a new issue, \textit{\textbf{shares}} issue, which might bring some unnecessary costs. To mitigate \textit{\textbf{shares}} issue, we propose Rapido which is equipped with D-HTLC. Furthermore, Rapido also mitigates the skewness issue. In addition, Rapido inherently preserves the privacy of total payment value, for which the payment value is split into shares that an intermediate node cannot intercept the whole information. We have conducted extensive experiments to evaluate Rapido, which demonstrate that the proposed system outperforms the state-of-the-arts.

\bibliographystyle{IEEEtran}

\bibliography{Mybib}

\end{document}